 \newtheorem{theorem}{\textit{Theorem}}
 \newtheorem{lemma}{\textit{Lemma}}
\date{\today}
\title{\LARGE Local Interactions for Cohesive Flexible Swarms}
\author{Rotem Manor$^{1}$, Ariel Barel$^{2}$, and Alfred M. Bruckstein$^{3}$}
\affil{Center for Intelligent Systems (CIS)\\ Computer Science Department \\ Technion, Haifa 32000, Israel.}
\begin{document}
\maketitle
\begin{abstract}

Distributed gathering algorithms aim to achieve complete visibility graphs via a ``never lose a neighbour" policy. We suggest a method to maintain connected graph topologies, while reducing the number of effective edges in the graph to order $n$. This allows to achieve different goals and swarming behaviours: the system remains connected but flexible, hence can maneuver in environments that are replete with obstacles and narrow passages, etc.\\

\end{abstract}

\addcontentsline{toc}{section}{Introduction}

\section*{Introduction}

We present a novel method for distributed control of multi-agent systems, which ensures maintaining flexible and connected spatial constellations. Such a requirement is needed in the context of many tasks that swarms of agents must carry out, such as mapping unknown environments, and deploying a connected network of sensors to cover unknown regions. The agents are assumed to be identical, anonymous (i.e. indistinguishable) and simple in the sense of having little or no memory (i.e. oblivious), with limited computation and sensing capabilities. In this paper we assume limited sensing range, denoted by $V$. The swarm remains connected, never splits into disjoint groups, hence remains able to perform tasks cohesively.\\

We consider systems comprising mobile agents that interact solely by adjusting their motion according to the relative location of their neighbours. The agents are assumed capable of sensing the presence of other agents within the given sensing range. The agents then implement rules of motion based on information on the geometric constellation of their neighbours.\\

The motion of agents is designed to ensure global connectivity without attempting to maintain all existing visibility connections between agents. Therefore, we obtain swarms that are cohesive but not rigid, allowing the agents to move more freely and assume various desirable formations, for example in order to pass through narrow passages between obstacles, and reduce communication loads when agents must act as a backbone network for communication, etc. Our algorithm is truly distributed, hence insensitive to the size of the swarm, since each agent carries out simple calculations and makes local individual decisions, rendering it suitable for swarms with very large numbers of agents.\\

Interactions between agents in multi-agent systems are often mathematically described using a graph, commonly labelled as $G(\mathcal{V},\mathcal{E})$, where $\mathcal{V}=\{\nu_1, \nu_2,...,\nu_n\}$ is the set of vertices (representing the agents), and $\mathcal{E} \subseteq \mathcal{V} \times \mathcal{V}$ is the set of edges (representing connections between the agents). The neighbourhood set of a vertex is the set of vertices connected to it, i.e.  $N_i \triangleq \{\nu_j\in \mathcal{V}\; |\; \{\nu_i,\nu_j \} \in \mathcal{E} \}$.\\

The connection between the agents is either fixed and known in advance, or it is defined based on geometric relationships between the agents' positions (in space, say in the plane $\mathbb{R}^2$), i.e. on the set $\{p_i \triangleq (x_i,y_i)\}_{i=1, 2, 3, ...}$. In the limited visibility case, agents are visible to each other if their mutual distance does not exceed the visibility range limit $V$, and the interconnection graph $G(\mathcal{V},\mathcal{E})$ is the visibility graph corresponding to the agents' locations. In this case, in $G(\mathcal{V},\mathcal{E})$, the set $\mathcal{E}$ is defined as follows:
$$
e_{i,j} \in \mathcal{E} \iff \|p_i-p_j\| \leq V
$$
A fundamental issue in a multi-agent system is to maintain cohesiveness, i.e. the interconnection graph corresponding to the system configuration, must stay connected, otherwise the system splits into disjoint parts. For a graph to be connected, there must always be a path between each pair of nodes in the graph.\\

In a recent survey on multi-agent geometric consensus \cite{barel2019come}, different types of distributed gathering algorithms are presented in detail, where gathering agents with limited visibility is addressed, as discussed in \cite{ando1999,gordon2004,gordon2005,gazi2003stability,gazi2004stability,olfati2007consensus,moreau2004,jadbabaie2003}. Other recent articles e.g. \cite{ozdemir2018finding}, \cite{dimarogonas2007rendezvous}, \cite{su2010rendezvous}, \cite{barel2019probabilistic}, \cite{manor2016chase} also deal with the consensus problem in distributed multi-agent systems. A common principle in the distributed control algorithms dealing with gathering, rendezvous, clustering, and aggregation of multi-agent systems is the "never lose a neighbour" policy, requiring:
$$
\forall i \;\: \text{and} \;\: \Delta t \geq 0 \quad : \quad j \in N_{i}(t) \Rightarrow j \in N_{i}(t+\Delta t)
$$
that is, once agents become neighbours, they must remain neighbours forever, hence the number of edges in the interconnection graph never decreases. Local behaviours for the agents that enforce this policy, with additional bias aiming to gradually add more edges to the graph, are often sufficient for the graph to become complete in finite expected time. In the limited visibility case, such systems necessarily gather to a region with ``diameter" less than $V$.\\

Hence, the ``never lose a neighbour" policy is optimally suited for the purpose of \textit{gathering}, since increasing the number of connected agents leads to clustering of agents to a small, visibility-horizon-defined area. However, for connectivity maintenance only, the ``never lose a neighbour" policy is certainly too conservative and stringent: maintaining connectivity only requires a \textit{connected} neighbourhood graph, rather than a \textit{complete} graph, and this goal may be achieved if some existing mutually visible agents cease to be defined as neighbours.\\

The conservative policy is quite popular in the multi-agent distributed control field since the lack of coordination between agents' decisions, may, in general, cause the graph to become disconnected. We here propose a local method for maintaining a connected graph in distributed multi-agent systems, which ensures, along with preserving the graph connectivity, that the number of edges does not exceed $3n$, where $n=| \mathcal{V} |$ is the number of agents in the system. The proposed method maintains graph connectivity with $| \mathcal{E} | \leq 3n$ as compared to $| \mathcal{E} | = \frac{n(n-1)}{2}$ in the complete graph.\\

Why can such a generic method be useful? A swarm in which the constraints that apply to agents are fewer is more flexible, so it can be steered through regions full of obstacles and narrow passages. A swarm that is connected but the number of neighbourhood defining edges in the graph is small, may self-organise in more interesting and useful constellations. A swarm in which the connections are determined according to the proposed algorithm can be spread and stretched so that the distance between its agents can attain distance of the order of $(n-1)V$, i.e. in some cases it can even be arranged to form a line. Hence we envisage that the proposed method can be a useful basic process in designing multi-agent swarming algorithms that are usually called upon to ensure cohesiveness in various cases of controlled swarm actions, where, for example, the swarm is guided by leaders or by some exogenous broadcast controls.\\

\section*{The cohesion ensuring process}
Following Toussaint's work (1980) on relative neighbourhood graphs (RNG) \cite{toussaint1980relative}, we define $N^{e}_{i}$, the ``effective neighbourhood" of agent $i$, as a subset of $N_{i}$:\\

\begin{equation}
\forall j \in N_{i} \\\ : \\\ j \in N^{e}_{i} \iff \nexists k : \left\{\begin{alignedat}{3}
    & \|p_{i}-p_{k}\| < \|p_{i}-p_{j}\| \\
    & \& \\
    & \|p_{j}-p_{k}\| < \|p_{i}-p_{j}\|\\
  \end{alignedat}\right.
\label{eq:EffectiveN}
\end{equation}
i.e. if agents $i$ and $j$ are neighbours, and there is an agent $k$ whose distance from agent $i$ and from agent $j$ is smaller than the distance between agents $i$ and $j$ (see Figure \ref{EdgeTrimming}), then the edge $e_{ij}$ is not necessary to maintain connectivity between agents $i$ and $j$. Coordination between the agents is already a feature built into this method because the decision of two agents whose connection is not necessary to maintain connectivity is symmetrical, that is, if a ``bridging" agent $k$ exists, we have that $j \notin N^{e}_{i} \iff i \notin N^{e}_{j}$.\\

\begin{figure}[ht]
  \centering
    \includegraphics[width=55mm]{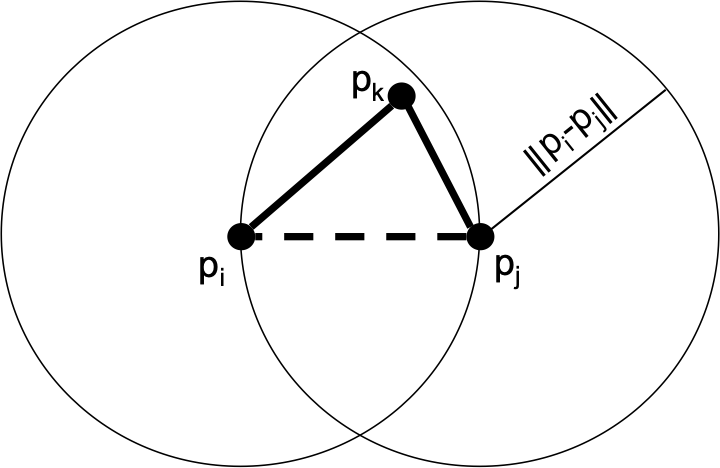}
    \caption{Two circles of radius $\|p_{i}-p_{j}\|<V$ centered at $p_{i}$ and $p_{j}$. If there is an agent $k$ inside the intersection lens of these circles, it is necessarily closer to $i$ than $j$ and closer to $j$ than $i$. Hence in this case $j \in N_{i}$ but $j \notin N^e_{i}$, and $i \in N_{j}$ but $i \notin N^e_{j}$.}
      \label{EdgeTrimming}
\end{figure}

Define $G^e$ as $G(\mathcal{V},\mathcal{E}^e)$, i.e. $G^e$ is a subgraph of the visibility graph $G$ whose set of edges $\mathcal{E}^e$ is defined by (\ref{eq:EffectiveN}).\\

An example of a neighbourhood graph vs. effective neighbourhood graph is shown in Figure \ref{Mesh}, where both solid and dashed lines belong to the neighbourhood graph, but only the solid lines belong to the effective neighbourhood graph. In this example only half of the edges remain after reducing unnecessary edges due to rule (\ref{eq:EffectiveN}).   

\begin{figure}[ht]
  \centering
    \includegraphics[width=80mm]{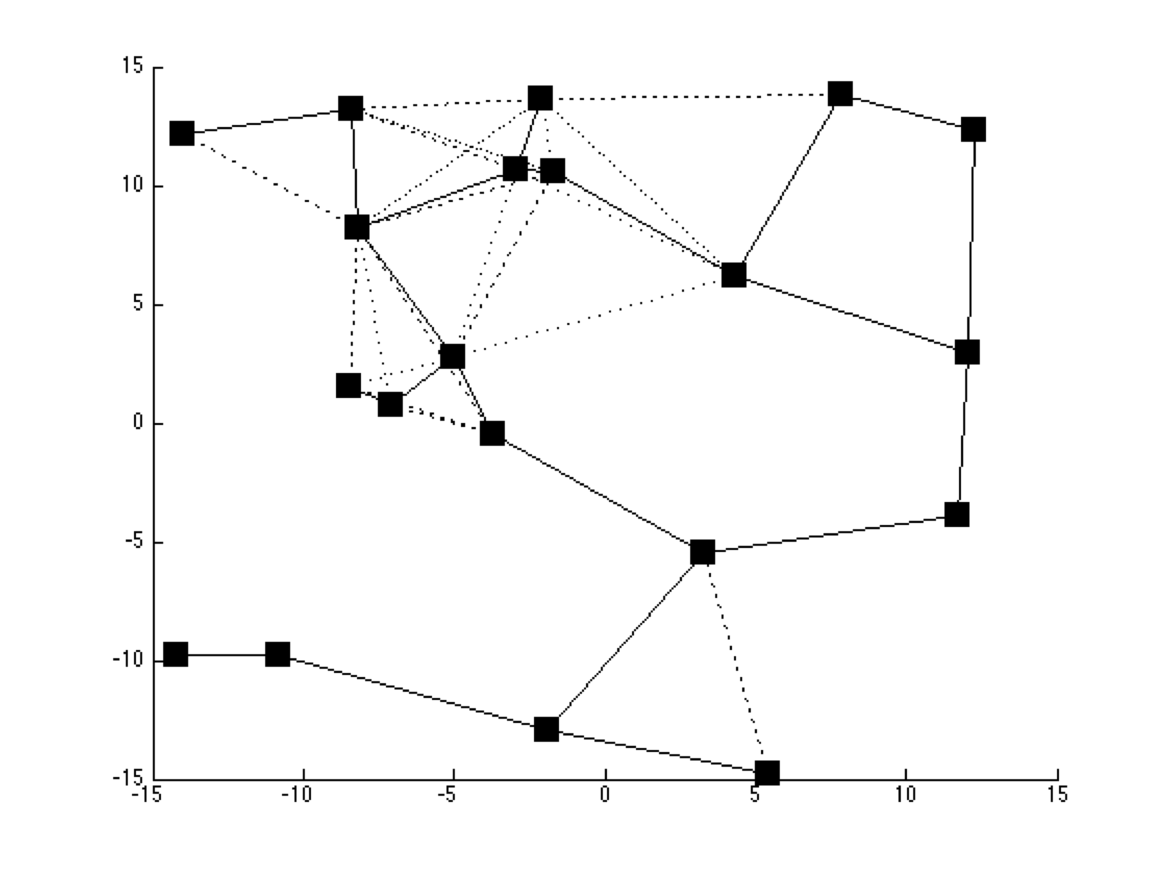}
    \caption{In this example, the original neighbourhood graph on $20$ nodes, contain $42$ edges between every pair of agents distanced $V$ or less (both solid and dashed lines). In the effective neighbourhood graph, $21$ edges that are unnecessary for connectivity maintenance due to rule (\ref{eq:EffectiveN}) are shown dashed, and only $21$ edges, shown in solid line, remain.}
      \label{Mesh}
\end{figure}

\begin{lemma}\label{EffectiveN_Connected}
If $G$ is connected, then $G^e$ is connected.
\end{lemma}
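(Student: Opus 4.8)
The plan is to argue by contradiction via a shortest crossing edge, in the spirit of Toussaint's classical proof that a relative neighbourhood graph is connected. Suppose $G^e$ were disconnected. Then $\mathcal{V}$ partitions into two nonempty disjoint sets $A$ and $B$ with no edge of $\mathcal{E}^e$ between them. Since $G$ is connected, at least one edge of $\mathcal{E}$ crosses this cut; among all such crossing edges pick one, $e_{ij}$ with $i\in A$ and $j\in B$, of minimum Euclidean length $\|p_i-p_j\|$.

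The heart of the argument is that this crossing edge cannot survive rule (\ref{eq:EffectiveN}). Indeed, $e_{ij}\in\mathcal{E}$ gives $j\in N_i$, while $e_{ij}\notin\mathcal{E}^e$ gives $j\notin N^e_i$, so by (\ref{eq:EffectiveN}) there is a bridging agent $k$ with $\|p_i-p_k\|<\|p_i-p_j\|$ and $\|p_j-p_k\|<\|p_i-p_j\|$. Since $\|p_i-p_j\|\le V$, both of these distances are strictly below $V$, so $e_{ik}$ and $e_{jk}$ both belong to $\mathcal{E}$ and are strictly shorter than $e_{ij}$; the strict inequalities also force $k\notin\{i,j\}$.

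Now split on the side containing $k$: if $k\in B$ then $e_{ik}$ is a crossing edge strictly shorter than $e_{ij}$, and if $k\in A$ then $e_{jk}$ is such an edge — either way contradicting the minimality in the choice of $e_{ij}$. Hence no disconnecting partition of $\mathcal{V}$ exists, and $G^e$ is connected. I expect no serious obstacle here; the only point requiring care is to read off, directly from the definition of $\mathcal{E}^e$, that a crossing edge lying in $\mathcal{E}\setminus\mathcal{E}^e$ genuinely produces a bridging agent $k$ and that $k$ lies within visibility range $V$ of both endpoints — both immediate once one notes $\|p_i-p_j\|\le V$. (An equivalent ``infinite descent'' phrasing instead takes a path in $G$ between two given vertices, replaces each non-effective edge $e_{ij}$ on it by the two strictly shorter edges $e_{ik},e_{jk}$, shortcuts to keep the path simple, and iterates; this terminates because only finitely many pairwise distances occur. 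The cut formulation above is the shorter route.)
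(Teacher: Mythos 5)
Your proof is correct and follows essentially the same route as the paper: a contradiction via a minimal crossing connection, where the bridging agent $k$ guaranteed by rule (\ref{eq:EffectiveN}) yields a strictly shorter crossing connection. The only (harmless) difference is that you minimise over crossing edges of $G$, while the paper minimises over the distance between the closest pair of agents across the cut; your variant even makes the bound $\|p_i-p_j\|\le V$ automatic.
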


\begin{proof}
We prove this Lemma by contradiction. Assume $G$ is connected and $G^e$ is disconnected to two disjoint connected sets $\mathcal{V}_{1}$ and $\mathcal{V}_{2}$. Hence, by assumption,
\begin{equation}
\forall (i \in \mathcal{V}_{1} , j \in \mathcal{V}_{2}) : \exists k : \left\{\begin{alignedat}{3}
    & \|p_{i}-p_{k}\| < \|p_{i}-p_{j}\| \\
    & \& \\
    & \|p_{j}-p_{k}\| < \|p_{i}-p_{j}\|\\
  \end{alignedat}\right.
\label{Contradiction}
\end{equation}
without loss of generality, assume $i \in \mathcal{V}_{1}$ and $j \in \mathcal{V}_{2}$ are the closest agents between the two groups $\mathcal{V}_{1}$ and $\mathcal{V}_{2}$. Since $G$ is connected by assumption, we have that $\|p_i-p_j\| \leq V$, being the closest pair. By assumption these two agents are not effective neighbours, hence are not connected in $G^e$, i.e. $i \notin N^e_{j}$ and $j \notin N^e_{i}$. But by (\ref{eq:EffectiveN}) there is necessarily an agent $k$ that links them, see (\ref{Contradiction}). Hence $i$ and $j$ can not be the closest agents.
\end{proof}

We note that the RNG has been proposed in multi-agent robotics in conjunction with multi-agent gathering and/or deployment algorithms, e.g. \cite{li2017continuous}, \cite{ganguli2005collective}, \cite{cortes2006robust}, \cite{martinez2007motion}. The novelty in our work is $(1)$ proposing the use of the RNG and some extensions for maximally flexible swarm operations, and $(2)$ defining and using, along with various goal oriented constraints, the largest local allowable regions for displacement that ensure maintenance of connectivity, in the spirit of \cite{gordon2005}. We note that methods to reduce the number of edges in geometric graphs while maintaining connectivity are also used for purposes that do not concern dynamics of agents, but, for example, for saving energy in ad-hoc communication networks, see e.g. \cite{koushanfar2007techniques}.

\section*{The motion Laws}

In order to allow the agents free movement that does not violate the preservation of connectivity, we define for each agent an ``Allowable Region". If all agents move into their allowable regions, preservation of connectivity is satisfied. Recalling Ando et. al. \cite{ando1999} for a pair of agents, the Allowable Region of each agent is a disc of radius $\frac{V}{2}$ centered at the midpoint position between the two agents $m_{ij}$ (see figure \ref{AllowableRegion2}).

\begin{equation}
AR_{ij}=AR_{ji}=D_{\frac{V}{2}}\left(\frac{p_i+p_j}{2}\right)
\label{AR}
\end{equation}

\begin{figure}[ht]
  \centering
    \includegraphics[width=40mm]{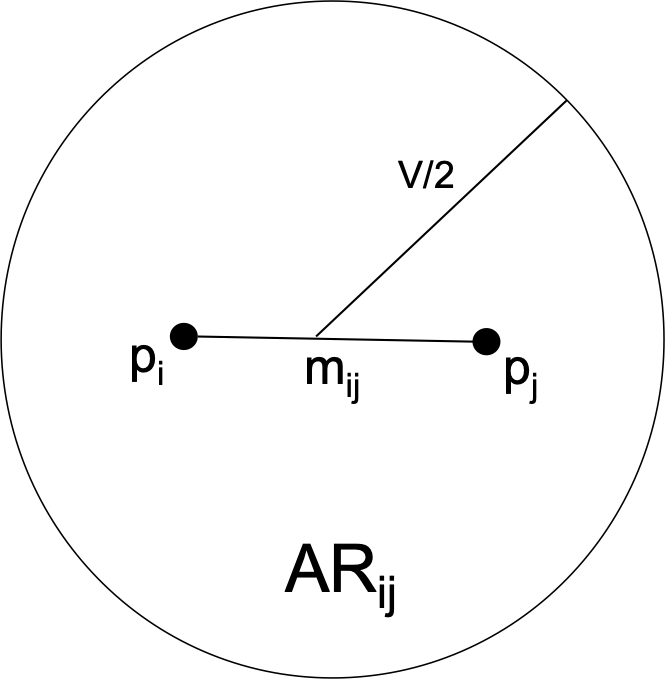}
    \caption{The Allowable Region $AR_{ij}$ of two agents $i$ and $j$ is a disc of radius $\frac{V}{2}$ centered at their meadpoint position $m_{ij}$.}
      \label{AllowableRegion2}
\end{figure}

If two agents whose mutual distance is smaller or equal $V$ move into their allowable region defined in (\ref{AR}) that they both can calculate, they keep their mutual distance smaller than $V$, hence the agents will stay connected.

If an agent has more than one neighbour, its allowable region is defined by the union of is allowable regions given by (\ref{AR}), see Figure \ref{AllowableRegionMany}:

\begin{equation}
AR_i=\bigcap\limits_{j\in N_{i}}AR_{ij}
\label{ARs}
\end{equation}

\begin{figure}[ht]
  \centering
    \includegraphics[width=55mm]{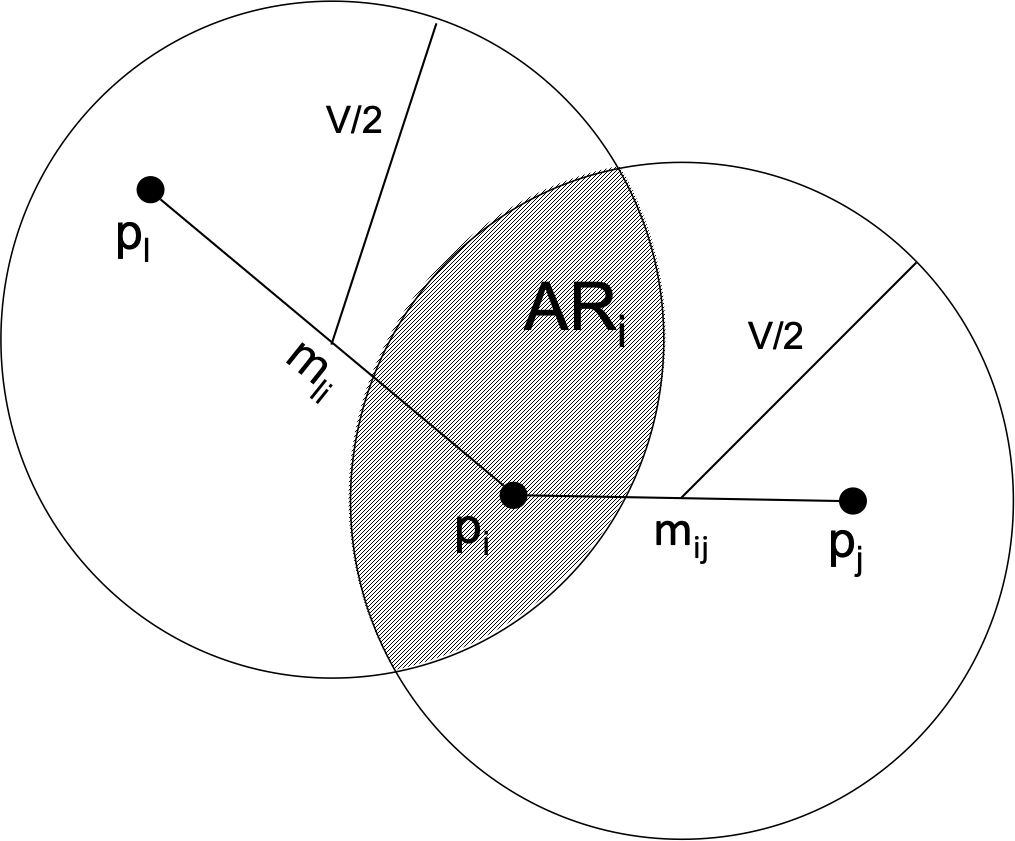}
    \caption{The allowable region of an agent is the union of its allowable regions, illustrated here by the dashed area $AR_i$.}
      \label{AllowableRegionMany}
\end{figure}

clearly if an agent moves to the union of its allowable regions defined by (\ref{ARs}), it is guaranteed that the distance from all its neighbours will not exceed $V$.\\

Similar to (\ref{ARs}),we define an ``Effective Allowable Region":

\begin{equation}
AR^e_i=\bigcap\limits_{j\in N^e_{i}}AR_{ij}
\label{EfARs}
\end{equation}
i.e. the effective allowable region of agent $i$ is the intersection of all its allowable regions relative to all the agents in its \textit{effective} neighbourhood, defined by $G^e$.\\

Hence, the motion law will be:
\begin{equation}
p_i(t) \longrightarrow p_i(t+1) \in AR^e_i(t)
\label{MotionLaw}
\end{equation}
i.e. the next position of agent $i$ is \textit{any point} inside its current effective allowable region (\ref{EfARs}). Since $AR_i(t) \subseteq AR^e_i(t)$, the restriction of movement in (\ref{EfARs}) is less constrained than that of (\ref{ARs}).

The freedom to move to any location inside allowable regions rather to a specific point, provides flexibility and allows other considerations and optimisations to be carried out by the agents.\\

\begin{theorem}
A system of agents obeying dynamic law (\ref{MotionLaw}) maintains connectivity.
\end{theorem}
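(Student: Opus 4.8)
The plan is to reduce the connectivity of the whole system to the already-established pair-wise property of the allowable regions, using Lemma~\ref{EffectiveN_Connected} as the bridge. The key observation is that at each time step the motion law~(\ref{MotionLaw}) moves every agent inside its \emph{effective} allowable region $AR^e_i(t)=\bigcap_{j\in N^e_i(t)}AR_{ij}(t)$, so in particular, for every edge $e_{ij}$ of the effective graph $G^e(t)$, both $p_i(t+1)$ and $p_j(t+1)$ lie in the common disc $AR_{ij}(t)=D_{V/2}(m_{ij}(t))$. Since that disc has diameter $V$, we get $\|p_i(t+1)-p_j(t+1)\|\le V$, so the edge $e_{ij}$ is still present in the \emph{visibility} graph $G(t+1)$.

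\medskip

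The steps, in order, would be: \textbf{(i)} Fix an arbitrary time $t$ and assume (inductively, with the initial configuration as base case) that $G(t)$ is connected. \textbf{(ii)} Apply Lemma~\ref{EffectiveN_Connected} to conclude that $G^e(t)$ is connected. \textbf{(iii)} Show that every edge of $G^e(t)$ survives into $G(t+1)$: for $e_{ij}\in\mathcal{E}^e(t)$, the motion law puts $p_i(t+1)\in AR^e_i(t)\subseteq AR_{ij}(t)$ and $p_j(t+1)\in AR^e_j(t)\subseteq AR_{ji}(t)=AR_{ij}(t)$; since $AR_{ij}(t)$ is a disc of diameter $V$, the triangle inequality through its centre $m_{ij}(t)$ gives $\|p_i(t+1)-p_j(t+1)\|\le \tfrac V2+\tfrac V2=V$, hence $e_{ij}\in\mathcal{E}(t+1)$. \textbf{(iv)} Therefore $G^e(t)$ is a spanning connected subgraph of $G(t+1)$, so $G(t+1)$ is connected. \textbf{(v)} Induct on $t$ to conclude connectivity is maintained for all time.

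\medskip

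The only subtlety — and the step I would be most careful about — is step~(iii)'s use of the \emph{effective} neighbourhood rather than the full neighbourhood: an agent only constrains itself by $AR_{ij}$ for $j\in N^e_i(t)$, not for all $j\in N_i(t)$, so a priori a non-effective visibility edge could be lost. This is exactly the point: we do \emph{not} claim all visibility edges persist, only that the effective ones do, and Lemma~\ref{EffectiveN_Connected} guarantees that is enough for connectivity. One should also note that the symmetry $AR_{ij}=AR_{ji}$ from~(\ref{AR}) is what makes the two agents' constraints compatible, and that $j\in N^e_i(t)\iff i\in N^e_j(t)$ (the symmetry of the effective-neighbour relation noted after~(\ref{eq:EffectiveN})) is what ensures that whenever agent $i$ respects $AR_{ij}$, agent $j$ simultaneously respects $AR_{ji}$. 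Beyond that the argument is a one-line triangle-inequality estimate, so there is no heavy computation to carry out.
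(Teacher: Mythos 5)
Your proposal is correct and follows essentially the same route as the paper: induction on $t$, Lemma~\ref{EffectiveN_Connected} to get connectivity of $G^e(t)$, and the observation that the motion law keeps both endpoints of every effective edge inside the common disc $AR_{ij}(t)$ of diameter $V$, so $\mathcal{E}^e(t)\subseteq\mathcal{E}(t+1)$ and $G(t+1)$ is connected. You merely spell out the disc/triangle-inequality estimate that the paper states more tersely (and had already justified when introducing~(\ref{AR})), so there is nothing to add.
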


\begin{proof}
The initial visibility graph $G(\mathcal{V},\mathcal{E})(0)$ associated with the system is connected by assumption. We have by Lemma \ref{EffectiveN_Connected} that $G^e(\mathcal{V},\mathcal{E})(t)$ is connected. If all agents obey motion law (\ref{MotionLaw}), we have that $\mathcal{E}^e(t) \subseteq \mathcal{E}(t+1)$, i.e. all edges of $G^e(\mathcal{V},\mathcal{E})$ are preserved. Therefore $G(\mathcal{V},\mathcal{E})(t+1)$ is connected. Since $G(\mathcal{V},\mathcal{E})(t+1)$ is connected, also $G^e(\mathcal{V},\mathcal{E})(t+1)$ will be connected.
\end{proof}

\section*{Applications and Simulation Results}

The distributed meta-algorithm presented so far opens up many possibilities of applications for distributed control of swarms. A number of interesting research directions that we consider relevant are algorithms with the ability to move a swarm of agents through narrow passages and obstacles, algorithms for mapping unknown areas, and reducing communication load when agents constitute a relay station for transmitting messages, as well as other topics requiring geometrically flexible swarms. We discuss below several examples.\\

\textbf{Example $1$: Traversal of narrow passages}\\

In this example (Figure \ref{NarrowPassageway}), which will be discussed in detail in a subsequent article, the swarm is required to pass through a narrow passage. we implemented a rule of motion that prevents two agents from approaching each other closer than $\frac{1}{10}V$. Then the agents movement is also governed by the cohesion rule  (\ref{MotionLaw}) with two additional restrictions on the agents' allowable regions: (1) they exclude the obstacle areas (1) they ensure line-of-sight preservation between effective neighbours. One designated ``leader" agent pulls the rest of the agents through the narrow passage. Because the swarm is not rigid, this passage is possible. By comparison, if the ``never lose a neighbour" policy would have been applied together with the minimum distance limit ($\frac{1}{10}V$), the swarm would have been too rigid to pass through the narrow passage.

\begin{figure}[ht]
  \centering
    \includegraphics[width=89mm]{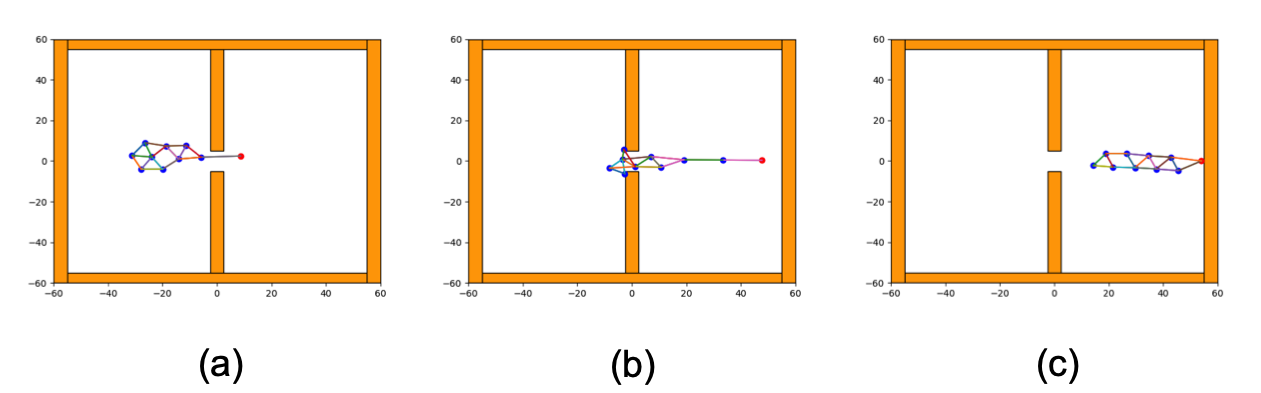}
    \caption{Passing through a narrow passage. (a) Initial constellation (b) The leader pulls the rest of the agents while only necessary edges are preserved (c) The swarm successfully completed the passage. This simulation result is based on the ``RNG Plus" approach described at the end of this article with the value $m=1$.}
      \label{NarrowPassageway}
\end{figure}

\textbf{Example $2$: Formation}\\

This example presents simulations of convergence to quasi regular formation (Figure \ref{StructuredMovement}). In a topology where each agent relates to all other agents (i.e. in complete graph), by geometry it is impossible to create a structure in which all the lengths of the edges are identical. In this simulation, each agent worked according to rule (\ref{eq:EffectiveN}) in order to choose only its effective neighbours, and only in relation to these (effective) neighbours he aspires to maintain a fixed distance of $\frac{1}{10}V$. The system ``anneals" to a nice quasi regular formation enabling nice and uniform area coverage as presented in the network evolution below.\\

\begin{figure}[ht]
  \centering
    \includegraphics[width=89mm]{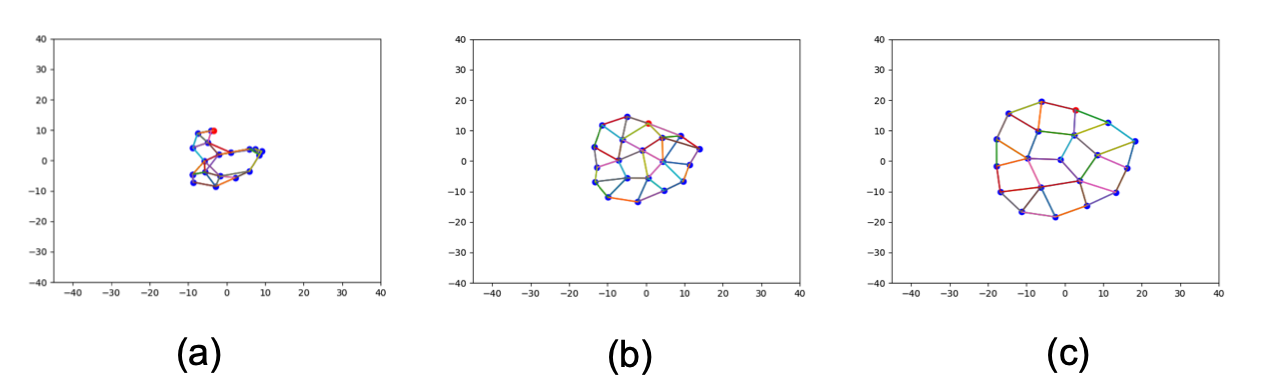}
    \caption{(a) Edges exist between every pair of mutually visible agents at a distance $V$ or less (b) Evolution of the network: only the necessary edges are kept (c) Final constellation: a required structured is achieved. This simulation result is based on the ``RNG Plus" approach described at the end of this article with the value $m=1$.}
      \label{StructuredMovement}
\end{figure}

\textbf{Example $3$: Following a leader}\\

In this example (Figure \ref{Leader}), one random agent in the group becomes a designated leader which will seek to move along a predetermined path. A process is shown in which the number of edges in the neighbourhood graph decreases, while the graph remains connected, until a linear constellation of agents is formed, and the linear ``snake-like" swarm follows the leader.

\begin{figure}[ht]
  \centering
    \includegraphics[width=89mm]{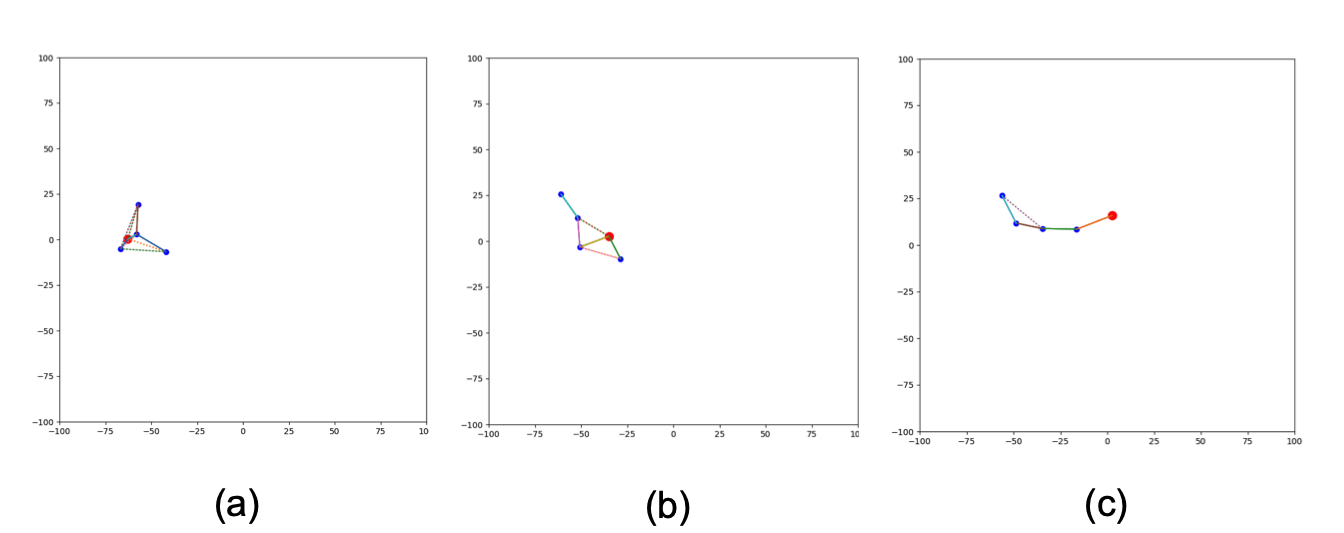}
    \caption{Following the leader. The randomly designated leader agent is shown as a large red circle, and the evolution of forming a line is presented from left to right. (a) Edges exist between every pair of agents distanced $V$ or less, and a random leader is selected (b) gradually unnecessary edges disappear while the graph remains connected (c) Final constellation: the required line structured is achieved while all the agents follow the leader agent.}
      \label{Leader}
\end{figure}

\textbf{Example $4$: Ad-hoc networking}\\

In this example, each node describes an agent with limited communication and visibility range $V$, and each solid-line edge describe an ``active" communication channel. In order to receive information from all the agents, a connected graph is required, but since the operation of transmission and reception consumes energy, there is motivation to reduce the number of necessary communication channels. In this simulation (see Figure \ref{Mesh}), all lines (both solid and dashed) are shorter than $V$, but not all the lines are necessary for maintaining connectivity, i.e. communication is guaranteed using the solid lines only. The sub graph shown in solid lines was distributedly established by the agents themselves, each agent having information only about neighboring agents in the range $V$. The upper limit of number of edges is $\frac{n(n-1)}{2}$ (i.e. clique). Under the geometric law (\ref{eq:EffectiveN}), in the $2d$ case, each agent can have up to $6$ effective agents, and therefore the upper limit of effective edges in this case will be $3n$. Toussaint \cite{toussaint1980relative} proved that finite planar set RNG, similar to ours effective graphs, contains at most $3n-6$ edges. It is clear that using the method reduces overall energy consumption associated with communication \cite{koushanfar2007techniques}.

\section*{Summary}
A general method was introduced for the distributed control of multi-agent systems, which maintains connected but flexible swarms. An algorithm was introduced, allowing to locally and in a distributed manner, to remove edges in an interconnection graph between agents, without risking disconnecting it, which reduces the number of edges in the graph over time from order $n^2$ to order $n$.

\section*{Discussion And Further Investigation}
The examples presented by simulation, did not include formal proofs. In a forthcoming article, we deal with formal definitions of motion laws in the presence of obstacles, and a formal analysis that the system indeed converges in finite expected time to the desired constellation.

An interesting extension of the ideas presented above is the design of local rules based on allowing ``at most $m$" agents of the swarm to be present in the intersection between the visibility discs of neighbours, in order to maintain the connection between them. In our previous analysis on RNG we took $m=0$. This extension named ``RNG Plus" will ensure more reliable connectivity at the expense of slightly less flexibility. This tradeoff too will be the subject of further investigation.

A video of some of the simulations shown above is available on-line at \url{https://youtu.be/A3jnYpY15DY}.

\addcontentsline{toc}{section}{References}
\bibliography{MARS_group_new}\
\bibliographystyle{unsrt}

\end{document}